\newlength{\bigtextsize}
\newtheorem{theorem}{Теорема}[section]
\newtheorem{lemma}{Лемма}[section]
\newtheorem{remark}{Замечание}[section]
\newtheorem{statement}{Утверждение}[section]
\newtheorem{definition}{Определение}[section]
\newcommand{\zs}{\ensuremath{\mathbb{Z}}}
\begin{document}

\title{\textbf{\large Вероятностные методы обхода лабиринта с использованием камней и датчика случайных чисел}}
\author{Е.Г. Кондакова, МФТИ, А. Я. Канель-Белов, SZU }
\date{}
\maketitle
\begin{abstract}

Существует широкий спектр задач посвященных возможности обхода лабиринта конечными автоматоми.
Они могут отличаться как типом лабиринта(это может быть любой граф, даже бесконечный), так и самими автоматами или их количеством.
В частности у конечного автомата может быть память(магазин) или генератор случайных битов.
В дальнейшем будем считать, что робот --- это конечный автомат с генератором случайных битов, если не сказано иное.
Кроме того в этой системе могут быть камни-объект, который конечный автомат может переносить по графу, и флажки- объект, наличие которого конечный автомат может только "наблюдать".
Эта тема представляет интерес в связм с тем, что некоторые из этих задач тесно связаны с задачами из теории вероятности и сложности вычислений.

В данной работе продолжают решаться некоторые открытые вопросы, поставленные в диссертации Аджанса: обход роботом с генератором случайных битов целочисленных пространств при наличии камня и подпространства флажков~\cite{And1}.
	Подобные задачи помогают развить математический аппарат в данной области, кроме того в этой работе мы исследуем практически не изученное поведение робота с генератором случайных чисел.
	Представляется чрезвычайно важным перенос комбинаторных методов, разработанных А. М. Райгородским в задачах этой тематики.

Данная работа посвящена обходу лабиринта конечным автоматом с генератором случайных битов.
Эта задача является частью активно развивающейся темы обхода лабиринта различными конечными автоматами
или их коллективами, которая тесно связана с задачами из теории сложности вычислений и теории вероятности.
В данной работе показано, при каких размерностях робот с генератором случайных битов и камнем может обойти
целочисленное пространство с подпостранством флажков. В данной работе будет изучено поведение конечного автомата с генератором случайных битов на целочисленных пространствах.
В частности доказано, что
робот обходит $\zs^2$ и не может обойти $\zs^3$;
робот c камнем обходит $\zs^4$ и не может обойти $\zs^5$;
робот c камнем и флажком обходит $\zs^6$ и не может обойти $\zs^7$;
робот c камнем и плоскостью флажков обходит $\zs^8$ и не может обойти $\zs^9$.

Работа поддержана Российским Научным Фондом, грант №17-11-01377.
\end{abstract}
\vspace{1cm}

\section*{Введение}
	Данная работа связана с задачей о поведении роботов в лабиринтах.
	Вопросы этой темы представляют большой интерес.
	Это связано с тем, что продвижения в некоторых важных задачах теоретической Computer Sсince могут быть получены из области поведения роботов в лабиринтах.
	Такое положение дел делает актуальными задачи данной тематики.
	Роботы в лабиринтах очень важны, в частности, им посвящено много литературы~\cite{Kilib1,Kilib3,Kilib4}.
	В данной работе продолжают решаться некоторые открытые вопросы, поставленные в диссертации Аджанса: обход роботом с генератором случайных битов целочисленных пространств при наличии камня и подпространства флажков~\cite{And1}.
	Подобные задачи помогают развить математический аппарат в данной области, кроме того в этой работе мы исследуем практически не изученное поведение робота с генератором случайных чисел.
	Представляется чрезвычайно важным перенос комбинаторных методов, разработанных А. М. Райгородским в задачах этой тематики.
	
	Задачи этой темы могут принимать разные виды, но есть несколько общих
	частей.
	 Основным элементом является робот или коллектив роботов~\cite{Kilib5}.
	   Робот --- это некоторый конечный автомат, но у него может быть генератор случайных битов или память в какой-то форме.
	Он перемещается в некоторой среде~\cite{Kilib2}.
	Её называют лабиринтом.
	В этой среде могут быть камни, которые робот может переносить, кроме того эта среда может быть раскрашена.
	Множество флагов в лабиринте, которые робот может увидеть, но
	не может переносить, по сути является двухцветной раскраской. В дальнейшем будем рассматривать n-цветные раскраски пространств как (n-1)-цветную раскраску подмножества флажков.
	Робот решает разные задачи такие как: встречи двух роботов, распознавание типа лабиринта и его обход, что фактически является решением нахождения клетки выхода из него.

\subsection*{Имеющиеся результаты для робота}
	В задачах ниже робот считается обходящим пространство, если для любой клетки пространства вероятность в ней побывать равна единице.
	
	В моей бакалаврской работе были доказаны простейшие случаи возможности обхода роботом с генератором случайных чисел целочисленного пространства:
	\begin{itemize}
		\item
		Робот обходит $\zs^2$ и не может обойти $\zs^3$,
		\item
		Робот c камнем обходит $\zs^4$ и не может обойти $\zs^5$,
		\item
		Робот c камнем и флажком обходит $\zs^6$ и не может обойти $\zs^7$,
		\item
		Робот c камнем и плоскостью флажков обходит $\zs^8$ и не может обойти $\zs^9$.
	\end{itemize}
	
	Аналогично решенным задачам можно показать, что робот c камнем и подпространством флажков размерности $\zs^n$ не может обойти $\zs^n+7$
	
	Увеличение количества камней также не представляет интереса, так как робот с двумя камнями может работать, как машина Минского (это будет показано в данной работе). А машина Минского может обойти любое $\zs^n$.

\section{Базовые определения нашей модели} \label{chapt1}

\begin{subsection}{Базовые определения системы робот-лабиринт}
Сначала дадим основные определения, которые нам понадобятся для понимания теоретической основы разбираемой темы. В рамках данной работы средой, в которой движется робот, будем считать некоторую группу. Тогда нашу модель можно описать с помощью следующих определений. Часть из этих терминов даны в урезанном варианте и рассматривается только с точки зрения решаемых задач.

\begin{definition}
{\em Системой лабиринт-робот} называется $L = (G,R,n,D,M)$, где $G$ --- некоторая группа, $R$ --- конечный автомат специального вида, $n$ --- неотрицательное целое число, $D$ --- подмножество $G$, $M$ --- подмножество $G$. (\cite{Kilib6}).
\end{definition}

\begin{definition}
Конечный автомат с генератором случайных битов $R=(Q,q_{0},\delta,\xi=(\xi _{1},\dots,\xi _{k}, \dots))$ называется {\it автоматом робота}, где $Q$ --- множество состояний автомата, $q_{0}\in Q$ --- начальное состояние автомата, $\delta \subset Q\times {0,1}^{(n+2)}\rightarrow Q$ --- функция переходов в автомате, $\xi =(\xi_{1},\dots,\xi_{k}, \dots)$ ---последовательность независимых одинаково распределенных случайных величин, $P(\xi_{i}=0)=P(\xi_{i}=1)=\frac{1}{2}$.
\end{definition}

\begin{remark}
Переходы в автомат $R=(Q,q_{0},\delta,\xi)$ осуществляются по битовым векторам длины $(n+2)$
\end{remark}

\begin{definition}
Состоянием системы лабиринт-робот называется набор $(a,s_{1},\dots,s_{n},k,q)$, где $a\in Q$, $s _{i}\in Q$, $q\in Q$, $k\in \mathbb{N}$. Будем называть $a$ --- расположением робота, $s _{i}$ --- расположениями камней, $k$ --- номером хода робота, соответствующим случайной величине $\xi _{k}$,  которая дает случайный бит. Начальным состоянием системы лабиринт-робот будет $(e,e,e,\dots,e,1,q _{0})$, где $e$ --- нейтральный элемент группы G.
\end{definition}

\begin{definition}
$(d_1,d_2,\dots,d_m) = D$ называются {\em переходными элементами} системы лабиринт-робот.
\end{definition}

\begin{definition}
Ходом в состоянии $q\in Q$ называется пара (d,p), где $d\in D$, $p\in {\{0,1\}}^{n}$.
\end{definition}

По сути ход соответствует перемещению робота в $G$ и множеству камней, которые он перенесёт с собой. Каждому элементу $Q\times {\{0,1\}}^{n+2}$ соответствует свой ход.

\begin{definition}
{\em Результатом хода робота} в состоянии системы $(a, s_l,s_2,\dots, s_n, k, q)$ будет состояние $(a', s_i, s'_2,\dots,s'_n,k + l, q')$ со следующими свойствами. Обозначим через $(d,p)$ ход соответствующий состоянию $q$, тогда

\begin{itemize}
	\item	$a' = ad$;
	\item 	$s'_i = s_id$ если $i$-ый бит $p$ равен 1 и $s_i = a$, иначе $s'_i = s_i$;
	\item 	$ w\in {\{0,1\}}^{n+2}$, где:
	\begin{itemize}
		\item $i$-ый бит $w$ равен 1 если $s_i = a$, иначе 0;
		\item $n + 1$-ый бит равен 1, если $a \in M$, иначе 0;
		\item $n + 2$-ой бит получается из $\xi_k$
	\end{itemize}
	\item $q'$ получается из состояния $q$ в автомате $R$ по вектору $w$.
\end{itemize}
\end{definition}
В целом, система лабиринт-робот $L$ работает так. Начинаем с начального состояния, и поочерёдно изменяем состояние согласно ходам робота. То есть, можно сказать, что данная система генерирует последовательность состояний $l_k$ согласно вышеописанным правилам. Расположение робота на $k$-ом ходу обозначим через $a_k$. Важно отметить, что вероятности перемещений не зависят от номера хода.

\begin{lemma}
Для задач обхода G равносильно рассматривать существование конечного автомата с генератором случайных чисел и существование недетерминированного автомата с рациональными вероятностями перехода (мы действуем в предположении, что существует $d_id_j = e$).
\end{lemma}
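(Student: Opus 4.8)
Одно из включений почти очевидно, и с него я бы начал: всякий автомат робота (конечный автомат с генератором случайных битов) сам является недетерминированным автоматом с рациональными (именно двоичными) вероятностями перехода. Действительно, в каждом состоянии $q$ ход $(d,p)$ фиксирован, а новое состояние после его выполнения есть $\delta(q,(w,0))$ или $\delta(q,(w,1))$, каждое с вероятностью $\tfrac12$, где $w$ --- детерминированная часть наблюдения (биты про камни и флажок). Значит, если $G$ обходит некоторый робот, то тот же автомат, рассматриваемый как недетерминированный с рациональными вероятностями, тоже обходит $G$.

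Содержательной является обратная импликация. Пусть недетерминированный автомат $R'$ с рациональными вероятностями переходов обходит $G$; построим робота $R$, пошагово моделирующего $R'$. Обозначим через $N$ общий знаменатель (конечного набора) всех вероятностей переходов $R'$ и положим $m=\lceil\log_2 N\rceil$. На очередном шаге $R$ должен разыграть распределение вида $(k_1/N,\dots,k_r/N)$ по исходам, где исход есть пара из следующего состояния $R'$ и совершаемого хода. Поскольку в нашей модели случайный бит выдаётся только вместе с ходом, $R$ будет добывать биты, не меняя фактического положения: он делает ход по образующей $g$, затем по $g^{-1}$ (такая пара существует по предположению $d_id_j=e$; для $G=\zs^n$ подходят стандартные образующие $\pm e_i$), каждый раз ничего не перенося ($p=0$), так что суммарное перемещение нулевое, расположение камней не меняется, а на каждом таком ходу $R$ считывает случайный бит. Накопив $m$ битов (что возможно с конечной памятью, ведь $m$ --- фиксированная константа), $R$ сравнивает полученное целое с $N$: если оно не меньше $N$, отбрасывает его и повторяет процедуру, иначе по фиксированной таблице кумулятивных сумм определяет исход; цикл отбрасывания завершается почти наверное. Выбрав исход, $R$ выполняет отвечающий ему ход $R'$ (перемещение в $G$ вместе с нужными камнями) и переходит к следующему шагу симуляции.

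Останется проверить корректность. Служебные перемещения по $g$ и $g^{-1}$ возвращают робота в исходную клетку и не трогают ни камни, ни флажок, поэтому на почти достоверном событии, что все фазы отбрасывания конечны, последовательность положений $R$ в моменты принятия решений совпадает по распределению с траекторией $R'$; в частности, $R$ посещает все те клетки, что и $R'$. Следовательно, раз $R'$ обходит $G$, то есть посещает каждую клетку с вероятностью $1$, то и $R$ обходит $G$. Самым тонким местом мне представляется аккуратная проверка того, что всё моделирование (накопление $m$ битов, сравнение с порогом $N$, выбор исхода по кумулятивному распределению и цикл отбрасывания) укладывается в конечное число состояний --- при том, что число считываний случайного бита на один шаг $R'$ сверху не ограничено, хотя почти наверное конечно; это стандартное, но требующее внимания рассуждение, равно как и переход от конечности всех фаз отбрасывания к совпадению полных траекторий $R$ и $R'$.
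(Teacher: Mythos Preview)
Your proposal is correct and follows essentially the same route as the paper: the easy direction is dismissed as a special case, and the substantive direction is handled by rejection sampling --- the robot harvests fair bits via cancelling auxiliary moves $d_i,d_j$ with $d_id_j=e$ (carrying nothing), uses a fixed-depth binary branching to land uniformly among $2^m$ outcomes, accepts if the outcome falls in one of the designated blocks and otherwise loops back to the original state. The paper organises this as a balanced binary tree of $2^{2q}$ leaves attached to each state (with $q$ the local common denominator), whereas you take a single global denominator $N$ and depth $m=\lceil\log_2 N\rceil$; this is a cosmetic difference, and your attention to the finite-state and almost-sure-termination issues matches what the paper implicitly relies on.
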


\begin{proof} Если существует автомат с генератором случайных чисел, то существует недетерминированный автомат с рациональными вероятностями перехода, так как, по сути, автомат с генератором случайных чисел --- частный случай недетерминированного автомата с рациональными вероятностями перехода.

Теперь покажем, что утверждение верно и в обратную сторону. Построим автомат с генератором случайных чисел. Там будут состояния, аналогичные состояниям недетерминированного автомата. Если переходы из вершины имели вероятности $(\frac{p_1}{q_1},\frac{p_2}{q_2},\dots,\frac{p_k}{q_k})$, тогда приведём их к общему знаменателю $$(\frac{p'_1}{q},\frac{p'_2}{q},\dots,\frac{p'_k}{q}).$$
Построим переходы с промежуточным состояниями таким образом, чтобы были аналогичные переходы с вероятностью $$(\frac{p'_1}{2^{2q}},\frac{p'_2}{2^{2q}},\dots,\frac{p'_k}{2^{2q}})$$ и возврат в исходную вершину с тем же состоянием с вероятностью $$\frac{2^{2q}-\sum p'_i}{2^{2q}}.$$ Для этого возьмём сбалансированное бинарное дерево из исходной вершины с $2^{2q}$ листами. Ход между ними имеет вид $(d_i,p)$ из вершины нечётного уровня и $(d_j,p)$ из вершины чётного уровня, где $p={\{0\}}^{n}$, а переходы зависят только от случайного бита. Из ${p'}_i$ листовых вершин переход в вершину аналогичной той, в которую был переход с вероятностью $\frac{p_i}{q_i}$ с таким же ходом. Из оставшихся листов однозначные переходы в ещё одну добавленную вершину с ходом $(d_i,p)$, а из нее переход в исходную вершину с ходом $(d_j,p)$. Так как$\frac{2^{2q}-\sum p'_i}{2^{2q}}<1$, то с вероятностью 1 робот в какой-то момент перейдёт в одно из состояний соответствующих переходам недетерминированного автомата из рассматриваемой вершины.\end{proof}

Общий смысл этих определений в том, что у нас есть робот, который является недетерминированным конечным автоматом $R$, $n$ камней, лабиринт $G$ и множество флагов на нем $M$, Робот итерационно переходит по своим состояниям и в соответствии этим переходам ходит по лабиринту. Кроме того, он может носить с собой камни, если находится с ними в одной клетке. По сути, робот является программой с конечной памятью и с возможностью получать случайные биты, что мы используем далее для более удобной демонстрации возможности обхода некоторых лабиринтов. Равносильность этих утверждений расписывать не будем, но задача написать программу по роботу или построить робота по программе не составляет особого труда.

\end{subsection}

\begin{subsection}{Случайные блуждания}

\begin{subsubsection}{Базовые понятия случайных блужданий}
	Рассмотрим некоторые необходимые понятия случайных блужданий:

	\begin{definition}
	Простое дискретное случайное блуждание в $\zs^k$ — это случайный процесс ${\{Y_n\}}_{n\geq0}$ с дискретным временем, имеющий вид
\end{definition}

	\begin{itemize}
		\item $Y_n=Y_0+\mathlarger{\mathlarger{\sum}}_{i=1}^{n}X_i$, где $Y_0$ — начальное состояние ${\{0\}}^k$;
	\item $P(X_i=e_j)=(X_i=-e_j)=\frac{1}{2k}$, где $e_1,\dots e_k$ -вектора естественного ортогонального базиса
	\item Случайные величины $X_i,\ i = 1, 2,\dots$ совместно независимы.
	\end{itemize}
	
	\begin{statement}
	Для случайного блуждания в $\zs^k$ равносильны следующие свойства:
	\end{statement}

	\begin{itemize}
		\item
	 $P$ ($Y_i={\{0\}}^k$ бесконечное число раз$)=1$, иначе говоря
	
	\begin{equation}
\forall l \ P(x:\exists i_1,i_2,\dots,i_l:\ i_1<i_2<\dots<i_l, \forall j \  Y_{i_j}(x)={\{0\}}^k)=1.
	\end{equation}
	
	Будем писать
$P(x:\exists i_1,i_2,\dots,i_l:\ i_1<i_2<\dots<i_l, \forall j \  Y_{i_j}(x)={\{0\}}^k)=1$), как $P(Y_i={\{0\}}^k$ хотя бы $k$ раз$)$,
	\item	$P(x:\exists i > 0:\ Y_i(x)={\{0\}}^k)=1$. Где $x$ элемент вероятностного пространства, на котором задано случайное блуждание). Назовём это свойство {\it возвратностью} случайного блуждания.
	\item $\forall \vec{x} $ Если $\exists i: P(Y_i=\vec{x})> 0$, то $P(Y_i=\vec{x})=1;	P(\exists i>0: Y_i={\{O\}}^k)$
	\end{itemize}

\begin{proof}
	Покажем несколько следствий между этими свойствами,
	\begin{itemize}
	\item $1\Rightarrow2$
	
	Очевидно, так как получается подставлением $l=1$
	\item $2\Rightarrow1$
	
	Продемонстрируем верность этого факта индукцией по $l$. В качестве базы будет свойство 2.
	
	{\bf Переход.} Пусть верно для $l$, докажем для $l+1$.
	
	\begin{equation}
	Q(l,m_1)=P(\exists i_1,\dots,i_l:\ i_1<i_2<\dots<i_l<m_1, \forall j \  Y_{i_j}(x)={\{0\}}^k) \underset{m\rightarrow \infty}{\longrightarrow}1
	\end{equation}
	
	\begin{equation}
	Q(1,m_2)=P(\exists i:\ i<m_2, \  Y_{i}(x)={\{0\}}^k) \underset{m\rightarrow \infty}{\longrightarrow}1
	\end{equation}
	
	Из независимости случайных величин можно вывести, что
	
	\begin{equation}
	Q(l+1,m_1+m_2)\geq Q(l,m_1)Q(1,m_2)
	\end{equation}
	
	Значит, он тоже стремится к 1.
	\item $3\Rightarrow2$
	Берём $\vec{x}={\{0\}}^k$
	\item $1\Rightarrow3$
	
	\begin{equation}
	P(Y_i = {\{0\}}^k\ \mbox{бесконечное\ число\ раз})=1
	\end{equation}
	
	Тогда, если	$\exists i:\ P(Y_i=\vec{x})>0$, то $P(y:\exists i>0:Y_i(y)=\overrightarrow{-x})=1$
	Аналогичные для $\overrightarrow{-x}$ получаем
	$P(y:\exists i>0:Y_i(y)=\vec{x})=1$.
	\end{itemize}

	\end{proof}

	\begin{remark}
	Из этого следует, что возвратность для простого случайного блуждания эквивалентна гарантированному обходу достижимого пространства~\cite{TV3}.
\end{remark}

\begin{remark}
	$P(x:\exists i > 0:\ Y_i(x)={\{0\}}^k)=1$ эквивалентно расходимости ряда
	$\mathlarger{\mathlarger{\sum}}_{i=1}^{\infty}P(Y_i={\{0\}}^k)$.
\end{remark}

\begin{definition}
	Взаимно однозначное отображение $ \pi= (\pi_1,\pi_2,\dots)$ множества (1, 2,\dots) в себя назовём конечной перестановкой, если $pi_n=n$ для всех $n$, за исключением, быть может, конечного числа.
\end{definition}

\begin{definition}	
	Если $\xi=(\xi_1,\xi_2,\dots)$ --- некоторая последовательность случайных величин, то через $\pi(\xi)$ будем обозначать последовательность $\xi=(\xi_{\pi_1},\xi_{\pi_2},\dots)$. Обозначим $\mathscr{F}_n^{\infty}=\sigma(\xi_n,\xi_{n+1},\dots)$
	--- $\sigma$-алгебру, порожденную случайными величинами $\xi_1,\xi_2,\dots$. И пусть $\mathscr{F}'=\mathlarger{\mathlarger{\bigcup}}_{n=1}^{\infty}\mathscr{F}_n^{\infty}$. Поскольку пересечение $\sigma$-алгебр есть $\sigma$-алгебра, то $\mathscr{F}'$ --- это $\sigma$-алгебра. Она называется «хвостовой» или «остаточной».
\end{definition}

\begin{definition}
	Событие $A=\{\xi \in B\}, B \in \mathscr{B}(\mathbb{R}^{\infty})$, то через $\pi(A)$ обозначим событие
	\begin{equation}
	\{\pi(\xi \in B)\},B \in \mathscr{B}(\mathbb{R}^{\infty}).
	\end{equation}
\end{definition}

\begin{definition}
	Событие $A=\{\xi \in B\}, B \in \mathscr{B}(\mathbb{R}^{\infty})$, называется {\em перестановочным}, если для любой конечной перестановки $\pi$ событие $\pi(A)$ совпадает с $A$.
\end{definition}
	
	\begin{theorem} Закон «0 или 1» Хьюитта и Сэвиджа~\cite{TV1}. \label{theo:hyusev}
		
	Пусть $\xi=(\xi_1,\xi_2,\dots)$ - последовательность независимых одинаково распределенных случайных величин и $A = \{\xi \in B\}$-перестановочное событие. Тогда вероятность $P(A)$ может принимать лишь два значения: нуль или единица.
\end{theorem}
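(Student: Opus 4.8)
Для доказательства я бы воспользовался стандартной схемой, опирающейся на аппроксимацию перестановочного события событиями из алгебры $\bigcup_{n}\sigma(\xi_1,\dots,\xi_n)$. Сначала я бы напомнил, что эта алгебра плотна в порождённой ею $\sigma$-алгебре $\sigma(\xi_1,\xi_2,\dots)$ в метрике симметрической разности: для любого $\varepsilon>0$ и любого события $A=\{\xi\in B\}$ найдутся номер $n$ и событие $A_n\in\sigma(\xi_1,\dots,\xi_n)$ такие, что $P(A\,\triangle\,A_n)<\varepsilon$. Именно это утверждение я считаю основным техническим препятствием — это классический факт теории меры (об аппроксимации элементов $\sigma$-алгебры элементами порождающей её алгебры), но в аккуратном виде он требует ссылки на соответствующую теорему; остальная часть доказательства по сравнению с ним чисто комбинаторная.

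Далее я бы рассмотрел конечную перестановку $\pi$, меняющую местами блоки индексов $\{1,\dots,n\}$ и $\{n+1,\dots,2n\}$ и оставляющую все прочие индексы на месте. Поскольку событие $A$ перестановочно, $\pi(A)=A$. Поскольку $\xi$ --- последовательность независимых одинаково распределённых величин, последовательность $\pi(\xi)$ имеет то же распределение, что и $\xi$, а значит $P(\pi(A_n)\,\triangle\,\pi(A))=P(A_n\,\triangle\,A)<\varepsilon$; отсюда $P(\pi(A_n)\,\triangle\,A)<\varepsilon$ и $P(\pi(A_n))=P(A_n)$. Ключевое наблюдение: $A_n$ зависит лишь от $\xi_1,\dots,\xi_n$, а $\pi(A_n)$ --- лишь от $\xi_{n+1},\dots,\xi_{2n}$, поэтому события $A_n$ и $\pi(A_n)$ независимы.

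Наконец, я бы собрал оценки вместе. Из включения $A\,\triangle\,(A_n\cap\pi(A_n))\subseteq(A\,\triangle\,A_n)\cup(A\,\triangle\,\pi(A_n))$ получаем $|P(A)-P(A_n\cap\pi(A_n))|<2\varepsilon$, а из независимости --- $P(A_n\cap\pi(A_n))=P(A_n)\,P(\pi(A_n))=P(A_n)^2$. Так как $|P(A)-P(A_n)|<\varepsilon$ и обе вероятности лежат в $[0,1]$, то $|P(A)^2-P(A_n)^2|<2\varepsilon$, и, объединяя, $|P(A)-P(A)^2|<4\varepsilon$. Устремляя $\varepsilon\to 0$, приходим к равенству $P(A)=P(A)^2$, откуда $P(A)\in\{0,1\}$. Повторюсь: основная тонкость --- корректное обоснование аппроксимационной леммы; сам приём с перестановкой двух блоков и предельный переход после этого уже рутинны.
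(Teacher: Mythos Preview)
Your argument is correct: this is the standard proof of the Hewitt--Savage zero--one law, via approximation by cylinder events and the block-swap permutation, exactly as in the reference the paper cites (Ширяев,~\cite{TV1}). Note that the paper itself does not give a proof of this theorem at all --- it merely states it with a citation --- so there is nothing to compare against beyond observing that your write-up matches the cited source.
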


\begin{remark}
	Заметим, что свойство {\it блуждания $Y_i$ c шагом $X_i\ A=(Y_i=\{o\}^k)$ бесконечное число раз} является перестановочным событием случайных величин $X_y$. Значит по Теореме~\ref{theo:hyusev} $P(A)$ принимает значение $0$ или $1$,
\end{remark}

\end{subsubsection}

\begin{subsubsection}{Невозвратность случайного блуждания с тремя некомпланарными векторами на $\zs^k$}
Доказательство невозвратности случайного блуждания с тремя некомпланарными векторами на $\zs^k$ разобьём на три части:

\begin{itemize}
\item	Невозвратность простого случайного блуждания на $\zs^3$;
\item Смесь двух случайных блужданий, одно из которых эквивалентно простому случайному блужданию на $\zs^3$ невозвратно;
\item Невозвратность случайного блуждания с тремя некомпланарными векторами на $\zs^k$.
\end{itemize}
	
	\begin{paragraph}{Невозвратность простого случайного блуждания на $\zs^3$.}

Докажем, что $\mathlarger{\mathlarger{\sum}}_{i=1}^{\infty}P(Y_i={\{0\}}^3)$ конечна, так как из этого следует, что

\begin{equation}P(\exists i > 0:\ Y_i(x)={\{0\}}^3)=\varepsilon<1.
\end{equation}

$P(Y_{2n+1}={\{0\}}^k)=0$, потому что мы не можем вернуться в исходную клетку за нечётное число шагов.

\begin{equation}
P(Y_{2n}={\{0\}}^3)=
\mathlarger{\mathlarger{\sum}}_{\substack{0\leq i_1, i_2, i_3\leq n\\ i_1+i_2+i_3=n}}\Big(\frac{(2n)!}{{(i_1!i_2!i_3!)}^2}{\big(\frac{1}{6}\big)}^{2n}\Big)=
\end{equation}

\begin{equation}=2^{-2n}\cdot C^{n}_{2n} \cdot
\mathlarger{\mathlarger{\sum}}_{\substack{0\leq i_1, i_2, i_3\leq n\\ i_1+i_2+i_3=n}}{\Big(\frac{n!}{i_1!i_2!i_3!}{\big(\frac{1}{3}\big)}^{n}\Big)}^2\leq
\end{equation}

\begin{equation}\leq 2^{-2n}\cdot C^{n}_{2n} \cdot C_{n} \cdot
\mathlarger{\mathlarger{\sum}}_{\substack{0\leq i_1, i_2, i_3\leq n\\ i_1+i_2+i_3=n}}\frac{n!}{i_1!i_2!i_3!}{\big(\frac{1}{3}\big)}^{2n}
\end{equation}
где $C_n=\underset{\substack{0\leq i_1, i_2, i_3\leq n\\ i_1+i_2+i_3=n}}{\max}\frac{n!}{i_1!i_2!i_3!}$, тогда

\begin{equation}
P(Y_{2n}={\{0\}}^3)\leq 2^{-2n}\cdot C^{n}_{2n} \cdot C_{n} \cdot 3^{-n}\cdot
\mathlarger{\mathlarger{\sum}}_{\substack{0\leq i_1, i_2, i_3\leq n\\ i_1+i_2+i_3=n}}\frac{n!}{i_1!i_2!i_3!}{\big(\frac{1}{3}\big)}^{n}=2^{-2n}\cdot C^{n}_{2n} \cdot C_{n} \cdot 3^{-n}
\end{equation}

Покажем, что $$C_n\sim \frac{n!}{\lceil {\frac{n}{3}+1}\rceil!^3} \sim \frac{n!}{\lfloor \frac{n}{3}!\rfloor^3}.$$ Ясно, что $$C_n \geq \frac{n!}{\lfloor \frac{n}{3}!\rfloor^3}$$ (возьмём $i_j\geq \lfloor \frac{n}{3}\rfloor$). Кроме того, при $i_1>\lfloor \frac{n}{3}\rfloor>i_2$  верно, что $i_1!\cdot i_2!>(i-1)! \cdot (j+1)!$. Увеличивая подобным образом $\frac{n!}{i_1!i_2!i_3!}$ мы остановимся, когда все $i_j$ будут равны $\lfloor \frac{n}{3}\rfloor$ или $\lceil \frac{n}{3}\rceil.$

Применив к
\begin{equation}
2^{-2n}\cdot C^{n}_{2n} \cdot C_{n} \cdot 3^{-n}
\end{equation}
формулу Стирлинга, получим $\frac{3\sqrt{3}}{2 \cdot \pi^{1.5} \cdot n^{1.5}}$.

Сумма $\sum \frac{3\sqrt{3}}{2 \cdot \pi^{1.5} \cdot n^{1.5}}$ сходится, значит $P(Y_{2n}={\{0\}}^3)$ ограничена функцией, сумма ряда которой сходится. Но, тогда она сама сходится, и блуждание является невозвратным.

\begin{statement}
	$$P(Y_n(\vec{x})\leq 6^3 \cdot (P(Y_n={\{0\}}^3)+P(Y_{n+1}={\{0\}}^3)+P(Y_{n+2}={\{0\}}^3)+P(Y_{n+3}={\{0\}}^3))$$
\end{statement}

\begin{proof}
	Обозначим координаты $\vec{x}=(x_1,x_2,x_3)$, $x'_i=x_i mod 2$, $x'=\mathlarger{\mathlarger{\sum}}_{i=1}^{3} x'_i$. Без ограничения общности можно считать, что $x_i \geq 0$. Тогда

\begin{equation}
	P(Y_{2n+x_1+x_2+x_3}=\vec{x})=
\end{equation}

\begin{equation}
=\mathlarger{\mathlarger{\sum}}_{\substack{0\leq i_1, i_2, i_3\leq n\\ i_1+i_2+i_3=n}}\Big(\frac{(2n+{\mathlarger{\sum}}_{j=1}^{j \leq 3}x_j)!}{{\mathlarger{\prod}}_{j=1}^{j \leq 3}(i_j!\cdot(i_j+x_j)!)}{\big(\frac{1}{6}\big)}^{2n+{\mathlarger{\sum}}_{j=1}^{j \leq 3}x_j}\Big)\leq
\end{equation}

\begin{equation}
\leq 6^3 \cdot \mathlarger{\mathlarger{\sum}}_{\substack{0\leq i_1, i_2, i_3\leq n\\ i_1+i_2+i_3=n}}\Big(\frac{(2n+x'+{\mathlarger{\sum}}_{j=1}^{j \leq 3}x_j)!}{{\mathlarger{\prod}}_{j=1}^{j \leq 3}(i_j!\cdot(i_j+x_j+x'_j)!)}{\big(\frac{1}{6}\big)}^{2n+x'+{\mathlarger{\sum}}_{j=1}^{j \leq 3}x_j}\Big)\leq
\end{equation}
	
	\begin{equation}
	\leq 6^3 \cdot \mathlarger{\mathlarger{\sum}}_{\substack{0\leq i_1, i_2, i_3\leq n\\ i_1+i_2+i_3=n}}\Big(\frac{(2n+x'+{\mathlarger{\sum}}_{j=1}^{j \leq 3}x_j)!}{{\big({\mathlarger{\prod}}_{j=1}^{j \leq 3}(i_j+(x_j+x'_j)/2)!\big)}^2}{\big(\frac{1}{6}\big)}^{2n+x'+{\mathlarger{\sum}}_{j=1}^{j \leq 3}x_j}\Big)\leq
	\end{equation}
	
	\begin{equation}
	\leq 6^3 \cdot
	\mathlarger{\mathlarger{\sum}}_{\substack{0\leq i_1, i_2, i_3\leq n'\\ i_1+i_2+i_3=n'}}\Big(\frac{(2n')!}{{(i_1!i_2!i_3!)}^2}{\big(\frac{1}{6}\big)}^{2n'}\Big)=6^3\cdot P(Y_{n'}={\{0\}}^3)
	\end{equation}
	
	Где $n' = n + (\mathlarger{\mathlarger{\sum}}_{j=1}^{j \leq 3}(x_j+x'j))/2$.
	Тогда $2n' - 2n - (x_1+ x_2  +  x_3 ) =$ 0, 1,2 или 3.
\end{proof}

\begin{remark}
	Из этого следует, что $P(Y_n =\overrightarrow{-x})<\frac{c}{n\sqrt{n}}$, где $c$ --- некоторая константа, одинаковая для всех $\vec{x}$.
\end{remark}
	
\end{paragraph}

\begin{paragraph}{Смесь двух блужданий, одно из которых эквивалентно случайному блужданию на $\zs^3$ невозвратно.} \
	
Смесь блужданий $X$ и $Y$, с вероятностями $p, q$ $(p+q=1; p, q>0)$, где $X$~-- эквивалентен простому случайному блужданию, обозначим через $Z$, Тогда

	\begin{equation}
	P(Z_n=\vec{z})=\sum_{\vec{x}+\vec{y}=\vec{z}}\sum_{i=0}^{n}P(X_i=\vec{x})\cdot P(Y_{n-i}=\vec{y})\cdot p^i \cdot q^{n-i} \cdot C_n^i .
	\end{equation}
	
Мы хотим доказать, что $P(Z_n = \vec{z}) < \frac{b}{n\sqrt{n}}$ для некоторого $b$. Обозначим за $\varepsilon = \min(p, q)^2$ и докажем, что такие константы b есть для

\begin{equation}
\sum_{\vec{x}+\vec{y}=\vec{z}}\sum_{i=0}^{i<\varepsilon n}P(X_i=\vec{x})\cdot P(Y_{n-i}=\vec{y})\cdot p^i \cdot q^{n-i} \cdot C_n^i
\end{equation}
и

\begin{equation}
\sum_{\vec{x}+\vec{y}=\vec{z}}\sum_{i\geq \varepsilon n}^{n}P(X_i=\vec{x})\cdot P(Y_{n-i}=\vec{y})\cdot p^i \cdot q^{n-i} \cdot C_n^i.
\end{equation}

	Начнём cо второго случая

	\begin{equation}
	\sum_{\vec{x}+\vec{y}=\vec{z}}\sum_{i\geq \varepsilon n}^{n}P(X_i=\vec{x})\cdot P(Y_{n-i}=\vec{y})\cdot p^i \cdot q^{n-i} \cdot C_n^i\leq
	\end{equation}

	\begin{equation}
	\leq \max_{i\geq\varepsilon n,\vec{x}}(P(X_i=\vec{x}))\sum_{\vec{x}+\vec{y}=\vec{z}}\sum_{i\geq \varepsilon n}^{n}\cdot P(Y_{n-i}=\vec{y})\cdot p^i \cdot q^{n-i} \cdot C_n^i=
	\end{equation}

	\begin{equation}
	= \max_{i\geq\varepsilon n,\vec{x}}(P(X_i=\vec{x}))\sum_{i\geq \varepsilon n}^{n}\sum_{\vec{x}+\vec{y}=\vec{z}} P(Y_{n-i}=\vec{y})\cdot p^i \cdot q^{n-i} \cdot C_n^i=
	\end{equation}

	\begin{equation}
	= \max_{i\geq\varepsilon n,\vec{x}}(P(X_i=\vec{x}))\sum_{i\geq \varepsilon n}^{n} p^i \cdot q^{n-i} \cdot C_n^i \cdot \sum_{\vec{x}+\vec{y}=\vec{z}} P(Y_{n-i}=\vec{y})\leq
	\end{equation}

	\begin{equation}
	\leq \max_{i\geq\varepsilon n,\vec{x}}(P(X_i=\vec{x}))\sum_{i\geq \varepsilon n}^{n} p^i \cdot q^{n-i} \cdot C_n^i\leq
	\end{equation}

	\begin{equation}
	\leq \max_{i\geq\varepsilon n,\vec{x}}(P(X_i=\vec{x}))\cdot {p+q}^n=\max_{i\geq\varepsilon n,\vec{x}}(P(X_i=\vec{x}))\leq \frac{c}{\varepsilon n \sqrt{\varepsilon n}}=\frac{c_2}{n\sqrt{n}}.
	\end{equation}

	Теперь разберём первый случай

	\begin{equation}
	\sum_{\vec{x}+\vec{y}=\vec{z}}\sum_{i=0}^{i<\varepsilon n}P(X_i=\vec{x})\cdot P(Y_{n-i}=\vec{y})\cdot p^i \cdot q^{n-i} \cdot C_n^i.
	\end{equation}

	Обозначим через $p'=\min(p,q),q'=1-p'$. Тогда, так как $\varepsilon \leq \frac{1}{4}$, то

	\begin{equation}
	\sum_{\vec{x}+\vec{y}=\vec{z}}\sum_{i=0}^{i<\varepsilon n}P(X_i=\vec{x})\cdot P(Y_{n-i}=\vec{y})\cdot p^i \cdot q^{n-i} \cdot C_n^i\leq
	\end{equation}

	\begin{equation}
	\leq\sum_{\vec{x}+\vec{y}=\vec{z}}\sum_{i=0}^{i<\varepsilon n}P(X_i=\vec{x})\cdot P(Y_{n-i}=\vec{y})\cdot p'^i \cdot q'^{n-i} \cdot C_n^i
	\end{equation}

	Зададим $f(i)=p'^i\cdot q'^{n-i}\cdot c^n_i$. Тогда $f(i)$ возрастает на $(0,\varepsilon n)$.

	\begin{equation}
	\frac{f(i+1)}{f(i)}=\frac{p'}{q'}\cdot\frac{n-i}{i+1}\geq\frac{p'}{q'}\cdot\frac{n-\varepsilon n}{\varepsilon n+1}\geq \frac{p' \cdot (1-p'^2)}{p'^2 \cdot (1-p')}=1+\frac{1}{p'}
	\end{equation}

	Значит, с какого-то $n$ можно сказать, что $f(i)$ возрастает на $(0,\varepsilon n)$.	Тогда

	\begin{equation}
	\sum_{\vec{x}+\vec{y}=\vec{z}}\sum_{i=0}^{i<\varepsilon n}P(X_i=\vec{x})\cdot P(Y_{n-i}=\vec{y})\cdot p'^i \cdot q'^{n-i} \cdot C_n^i\leq
	\end{equation}

	\begin{equation}
	\leq f(\lfloor\varepsilon n\rfloor)\cdot\sum_{i=0}^{i<\varepsilon n} \sum_{\vec{x}+\vec{y}=\vec{z}}P(X_i=\vec{x})\cdot P(Y_{n-i}=\vec{y})\leq
	\end{equation}

	\begin{equation}
	\leq f(\lfloor\varepsilon n\rfloor)\cdot\varepsilon n
	\end{equation}

	Так как $\sum_{\vec{x}+\vec{y}=\vec{z}}P(X_i=\vec{x})\cdot P(Y_{n-i}=\vec{y})\leq\sum_{\vec{x}}P(X_i=\vec{x})\cdot \sum_{\vec{y}}P(Y_i=\vec{y})=1$.

	\begin{equation}
	f(\lfloor\varepsilon n\rfloor)\cdot\varepsilon n=p'^2 \cdot n\cdot C^{n\cdot p'^2}_n \cdot {p'}^{np'^2}\cdot(1-p')^{n(1-p'^2)}
	\end{equation}

	Применив сюда формулу Стирлинга, получим

	\begin{equation}
	f(\lfloor\varepsilon n\rfloor)\cdot\varepsilon n\sim const \cdot \frac{n}{\sqrt{n}}\cdot\Big(\frac{p'^{p'^2}\cdot (1-p')^{1-p'^2}}{(p'^2)^{p'^2}\cdot (1-p'^2)^{1-p'^2}}\Big)^n
	\end{equation}

	Если мы докажем, что $d=\frac{p'^{p'^2}\cdot(1-p')^{1-p'^2}}{(p'^2)^{p'^2}\cdot (1-p'^2)^{1-p'^2}}<1$, тогда так как верно, что $\forall c$$\forall d'<0$$ \exists n_0 \forall n>n_0$

	\begin{equation}
	const \sqrt{n}\cdot d'^n\leq c \cdot \frac{1}{n \sqrt{n}}
	\end{equation}

	Мы получим, что $\exists n_0,c': \forall n>n_0$

	\begin{equation}
	\sum_{\vec{x}+\vec{y}=\vec{z}}\sum_{i=0}^{i<\varepsilon n}P(X_i=\vec{x})\cdot P(Y_{n-i}=\vec{y})\cdot p^i \cdot q^{n-i} \cdot C_n^i\leq \frac{c'}{n\sqrt{n}}
	\end{equation}

	Так как эта сумма сама по себе не превосходит единицу, то $\exists n_0,c_1: \forall n>0$

	\begin{equation}
	\sum_{\vec{x}+\vec{y}=\vec{z}}\sum_{i=0}^{i<\varepsilon n}P(X_i=\vec{x})\cdot P(Y_{n-i}=\vec{y})\cdot p^i \cdot q^{n-i} \cdot C_n^i\leq \frac{c_1+n_0\sqrt{n_0}}{n\sqrt{n}}
	\end{equation}

	Осталось проверить утверждение про d:
	\begin{equation}
	d<1\Leftrightarrow \frac{p'^{p'^2}\cdot(1-p')^{1-p'^2}}{(p'^2)^{p'^2}\cdot (1-p'^2)^{1-p'^2}}<1\Leftrightarrow
	\end{equation}

	\begin{equation}
	\Leftrightarrow p'^{p'^2}\cdot(1-p')^{1-p'^2}<(p'^2)^{p'^2}\cdot (1-p'^2)^{1-p'^2}\Leftrightarrow  p'^{(-p'^2)}<(1+p')^{1-p'^2}\Leftrightarrow
	\end{equation}

	\begin{equation}
	\Leftrightarrow
	\big(1+\frac{1}{p'}\big)^{(p'^2)}<1+p'\Leftrightarrow
	p'^2\cdot \ln\big(1+\frac{1}{p'}\big)<\ln(1+p')\Leftrightarrow
	\end{equation}

	\begin{equation}
	\Leftrightarrow
	\frac{\ln(1+p')}{p'}<p'\cdot{\ln\big(1+\frac{1}{p'}\big)}
	\end{equation}

	Заметим, что последнее это $g(1+p')>g\big(1+\frac{1}{p'}\big)$, где $g(x)=\frac{\ln(1+x)}{x}$. Нам достаточно показать, что $g(x)$  убывающая функция при $x>0$.

	\begin{equation} g'(x)<0
	\Leftrightarrow
	\frac{\frac{x}{1+x}-\ln(x+1)}{x^2}<0
	\Leftrightarrow
	\frac{x}{1+x}-\ln(x+1)<0
	\Leftrightarrow
	\end{equation}

	\begin{equation} g'(x)<0
	\Leftrightarrow
	0<\ln(x+1)-1+\frac{1}{1+x}=h(x)
	\end{equation}

	$h(0)=0$ и $h'(x)=\frac{1}{1+x}-\frac{1}{(1+x)^2}=\frac{x}{(1+x)^2}$, что больше нуля при $x>0$, тогда $h(x)>0$.
	
	Из разбора этих двух случаев становится ясно, что $P(Z_n=\vec{z})\leq \frac{c_1+c_2+n_0\vec{n_0}}{n\sqrt{n}}$.
\end{paragraph}

\begin{paragraph}{Невозвратность случайного блуждания с тремя некомпланарными векторами.}

Докажем это утверждение на $\zs^k$.

\begin{remark}
	Если бы это было не так, то из любой достижимой клетки можно было бы вернуться в стартовую с положительной вероятностью.
\end{remark}

\begin{remark}
Если блуждание $Y_i$ построенное на шагах $X_i$ возвратное, то $Y^{0}_i$ которое строится на нем с шагами равным $n$ обычных ходов $X'_i=\sum_{j=n\cdot (i-1)+1}^{n\cdot i}X_i$, то есть $Y^{0}_i=Y_{ni}$ тоже возвратное.
\end{remark}

\begin{proof}
	Если бы это было не так, то $P(Y^{0}_i ={\{0\}}^k )$ бесконечное число раз$)=0$ (по закону 0 и 1), кроме того, рассмотрим блуждания $Y^{l}_i=Y_{ni+l}$ , при $l<n$. Какое-то из них посещает ${\{0\}}^k$ бесконечное число раз, но тогда рассмотрим его после первого посещения ${\{0\}}^k$. Оно будет эквивалентно $P(Y^{0}_i ={\{0\}}^k $ бесконечное число раз$)$. Противоречие.
\end{proof}

 Пусть случайное блуждания $Y_i$ с тремя некомпланарными векторами на $\zs^k$ возвратно (обозначим эти вектора $\vec{a_1}, \vec{a_2}, \vec{a_3}$).
  Тогда $\exists n_1>1: P(Y_{n_1-1}=\vec{-a_1})>0$, кроме того $P(Y_{n_1-1}=(n_1-1)\cdot \vec{a_1})>0$ ($n_1-1$ шаг по $-\vec{a}$).
  Из этого следует, что для $m_1=n_1\cdot(n_1-1)^2,\ l_1=(n_1-1)\cdot n_1$ верно, что $P(Y_{m_1}=-l_1\cdot\vec{a_1})>0$ (делаем $n_1\cdot(n_1-1)$ раз шаг за $n_1-1$ ходов дающий $\overrightarrow{-a}$ ) и  $P(Y_{m_1}=l_1\cdot\vec{a_1})>0$
  ( делаем $(n_1-2)\cdot(n_1-1)$ раз шаг за $n_1-1$ ходов дающий $\overrightarrow{-a_1}$ и $2(n_1-1)^2 $ шагов по вектору $\vec{a_1}$). Проведя аналогичные действия получим
  $Y'_i=Y_{m_1\cdot m_2\cdot m_3}$ с положительными вероятностями переходов по векторам
  $$l_1\cdot m_2 \cdot m_3 \cdot \vec{a_1}, -l_1\cdot m_2 \cdot m_3 \cdot \vec{a_1},
  m_1\cdot l_2 \cdot m_3 \cdot \vec{a_2}, -m_1\cdot l_2 \cdot m_3 \cdot \vec{a_2},
  m_1\cdot m_2 \cdot l_3 \cdot \vec{a_3}, -m_1\cdot m_2 \cdot l_3 \cdot \vec{a_3}.$$
  Но тогда $Y'_i$ --- смесь блуждания эквивалентного простому в $\zs^3$ c коэффициентом минимальной вероятности одного из этих 6 шагов( а она положительная) умноженным на 6 и еще некоторого блуждания. Значит оно невозвратно. Противоречие.

  \begin{remark}
  	Для случайного блуждания с тремя некомпланарными векторами (и ненулевой вероятностью вернуться в стартовую клетку из любой достигаемой) $Y_i$ $\forall \varepsilon>0 \exists$ конечное число $\vec{x}: P(\exists i: Y_i={\vec{x}})>\varepsilon$.
  \end{remark}

Это легко доказать, показав, что для $$Y_i\ \exists c\ \forall \vec{x}\ \forall n \  P(Y_n=\vec{x})<\frac{c}{n\sqrt{n}}.$$ Из этого следует, что $$\exists n_0\ \forall \vec{x} : P(\exists n>n_0 :Y_n=\vec{x})<\frac{\varepsilon}{2},$$ но количество $\vec{x}:P(\exists n\leq n_0 :Y_n=\vec{x})>=\frac{\varepsilon}{2}$ не больше $ \frac{2n_0}{\varepsilon}$. Значит, количество клеток, которые будут посещены с вероятностью хотя бы $\varepsilon$ не больше $\frac{2n_0}{\varepsilon}$.

\end{paragraph}

\end{subsubsection}

\end{subsection}
	
\section{Обход пространства размерностью меньше 10.}

\begin{subsection}{Робот с генератором случайных битов.}

	Разберём для начала случай, когда есть только робот, и покажем, что он может обойти $\zs^2$ и не может $\zs^3$ . В этом случае $M=\{\mbox{\o}\}, n=0$. Множество $D$ в пространстве $\zs^k$ имеет вид $(\vec{e_1},-\vec{e_1},\vec{e_2},-\vec{e_2},\dots,\vec{e_k},-\vec{e_k})$. Робот считается обходящим пространство, если для любой ее клетки он посещает её с вероятностью один.

\begin{subsubsection}{Обход роботом $\zs^2$.}
	Написать программу для такого робота довольно легко. Он просто должен эмулировать случайное блуждание. А так как оно обходит плоскость, то и робот ее обойдёт~\cite{TV2}.
\end{subsubsection}

\begin{subsubsection}{Невозможность обойти $\zs^3$.}
	Докажем это для любого робота. Рассмотрим граф, соответствующий недетерминированному конечному автомату. Обозначим его размер через $m$. Будем считать, что вероятность каждого ребра больше нуля и все вершины достижимы из начальной (кроме, возможно, её самой). Иначе их можно просто выкинуть и это никак не повлияет на поведение робота. Возьмём компоненты сильной связности графа. Назовём компоненту листовой, если из неё нет рёбер в другие компоненты (такая обязательно есть). Клетками будем называть элементы лабиринта (в данном случае элементы пространства  $\zs^k$), вершинами состояния робота. Расстояние между клетками будет считать по метрике расстояния городских кварталов. Плоскостями будем называть множество клеток вида $\vec{x_0}+a_1\vec{x_1}+a_2\vec{x_2}$	при фиксированных $\vec{x_0},\ \vec{x_1},\ \vec{x_2}$ и целых $a_1,\ a_2$. Предположим, что есть робот, обходящий пространство.

\begin{statement}
	Робот с недетерминированным графом, соответствующим листовой компоненте сильной связанности робота, обходящего пространство, тоже должен обходить пространство.
\end{statement}

\begin{proof}
	Обозначим какую-то из его вершин за новое начальное состояние $q'_0$, а новый автомат, построенный на этой листовой компоненте за $R'_0(q'_0)$, размер компоненты обозначим через $m'$. Тогда новый робот должен обходить все пространство, кроме не более чем $m$ клеток. Это связано с тем, что от $q_0$ до $q'_0$ есть путь не более, чем за $m$ шагов (обозначим за $\vec{x_1}(q'_0)$ клетку в пространстве в которой мы оказались дойдя до $(q'_0)$). Тогда продолжение этого пути с вероятностью 1 должно побывать во всех клетках пространства, кроме тех, в которой он уже был, а их не более $m$. Так как они последовательны, то эти клетки помещаются в сферу с центром в ${\{0\}}^k$ (где $k$ --- размерность пространства, в котором мы работаем) и радиусом $m$, и в сферу с центром $\vec{x_1}(q'_0)$ и радиусом $m$. Это верно при выборе любого $q'_0$ из нашей листовой компоненты. С вероятностью один мы когда-нибудь попадём в клетку $\vec{x_2}$ на расстоянии $2m+1$ от ${\{0\}}^k$. Обозначим состояние, в котором мы оказались в этой клетке за $q'_1$ (Это состояние принадлежит нашей листовой компоненте). С этого момента робот ведёт себя как робот $R'(q'_1)$, находящийся в клетке $\vec{x_2}$ в изначальном состоянии, а он обходит все пространство, кроме сферы с центром в $\vec{x_2}$ и радиусом $m$. Значит $R'(q'_0)$ обходит с одной стороны все клетки кроме сферы с центром в ${\{0\}}^k$ и радиусом $m$, с другой стороны все клетки кроме сферы с центром в $\vec{x_2}$ и радиусом $m$. Эти сферы не пересекаются, поэтому $R'_0(q'_0)$ обходит все пространство. Кроме того, перемещения этого робота являются возвратными.
\end{proof}
	
	Теперь разберёмся, почему недетерминированный автомат, ориентированный граф которого является компонентой сильной связности, не может обходить пространство (начальная вершина --- $q_0$, размер графа $m$). Обозначим за $p_{i,j}$ вероятность попасть в состояние $q_j$ из состояния $q_i$ за количество шагов, не превосходящее $m$ (при $i=j$ считаем, что нужен хотя бы один шаг). Так как это компонента сильной связности c m вершинами $p_{i,j}>0$. Обозначим за p = $\mathlarger{\min_{i,j}}\ p_{i,j}$. Тогда вероятность не попасть в течение $l\cdot m$ шагов в $j$-ую вершину меньше ${(1 - p)}^l$. А оно стремится к нулю при $l\rightarrow\infty$. Значит, с вероятностью 1 мы побываем в состоянии $q_j$ из чего следует, что мы побываем там бесконечное число раз.
	
	Мы уже получили, что перемещения такого робота возвратны, то есть он побывает в клетке ${\{0\}}^k$ бесконечное число раз. Рассмотрим вероятности попасть в клетку в состояние $q_j$, если прошлый раз мы были в ней в состояние $q_i$. Обозначим эту вероятность через $p_{i,j}$ ($p_{i,j}$ не зависит от клетки в которой мы начинаем). $\forall i \sum_{j=1}^{m} p_{i,j}= 1$. Рассмотрим эти переходы по состояниям, как некоторый другой ориентированный граф, по которому мы гуляем бесконечно долго. В нем можно выбрать несколько состояний $q_{i,1},q_{i,2},\dots,q_{i,l}$ таких, что мы гарантировано побываем хотя бы в одном из них, они все достижимы из $q_0$ и $q_{i,j_1}$ не достижимо из  $q_{i,j_2}$ (будем брать по одной вершине из листовых компонент сильной связности нового ориентированного графа). Мы побываем в одном из этих состояний бесконечное число раз. Обозначим его через $q'$.

	\begin{remark}
		Для любой клетки, где робот был в состояние $q'$, он побывал там бесконечное число раз.
	\end{remark}

	Теперь рассмотрим случайное блуждание в пространстве с векторами переходов и их вероятностями такими, что пара (вектор, вероятность) соответствуют паре (вектор, вероятность) перемещения робота между двумя состояниями $q'$. Как сказано выше, это случайное блуждание возвратно, значит оно не может содержать некомпланарных векторов. Тогда после первого попадания в $q'$ множество клеток, где мы можем быть в состояние $q'$ является элементом какой-то плоскости. Но мы гарантировано побываем в любой клетке, а в сфере с центром в ней и радиусом $m$ есть вероятность побывать в состояние $q'$. Тогда если мы возьмём клетку на расстояние больше $m+1$ от этой плоскости, то мы не сможем в неё попасть. Противоречие.
\end{subsubsection}

\end{subsection}

\begin{subsection}{Робот с генератором случайных чисел и камнем}
	Теперь рассмотрим случай робота с камнем. Изначальное положение его и камня в клетке с нулевыми координатами.

\begin{subsubsection}{Обход $\zs^4$}
	Довольно легко описать программу, в соответствии с которой будет перемещаться робот. Он отходит от камня, случайно блуждает вдоль координат $x_1$, $x_2$, пока не вернётся к камню, а потом делает шаг в случайном направлении вдоль $x_3$, $x_4$ вместе с камнем. Повторяет.
	Так как случайное блуждание на плоскости возвратно, робот все время возвращается к камню. Тогда, ходя вместе с камнем, робот обходит всю плоскость $x_3$, $x_4$. так как перемещения с камнем в случайном направлении тоже является случайным блужданием. Оно возвратно, поэтому робот побывает с камнем во всех клетках плоскости $x_3$, $x_4$ бесконечное число раз. Тогда для любой из этих клеток мы бесконечное число раз блуждали от камня вдоль $x_1$, $x_2$. Рассмотрев только эти ходы получим простое случайное блуждания из клетки плоскости  $x_3$, $x_4$ вдоль $x_1$, $x_2$, а таким образом можно получить все клетки пространства.
\end{subsubsection}

\begin{subsubsection}{Невозможность обойти $\zs^5$}
	Разберёмся, как ходит робот. Разделим его перемещения на два типа:
	\begin{itemize}
		\item Перемещение с камнем,
		\item Перемещение без камня.
	\end{itemize}

	  \begin{paragraph}{Перемещение без камня.}

	  	\begin{statement}
	  		Если робот, обходящий пространство, уходит от камня, то он должен к нему вернуться с вероятностью один
	  		\end{statement}

  		\begin{proof}
  			Робот, который с какого-то момента имеет вероятность не вернуться, с этого момента соответствует какому-то роботу без камня. А про них уже доказано, что они обходят в лучшем случае плоскость и клетки, удалённые от неё не более, чем на $m$(где $m$-количество состояний робота). А это --- явно не всё пространство.
  		\end{proof}

  		Покажем, что если робот возвращается к камню с вероятностью один, то множество клеток, которые он может посетить, отойдя от камня, является конечным объединением плоскостей. Пусть мы отходим от камня в состоянии $q_1$. Тогда возьмём недетерминированный конечный автомат $R'$ с состояниями, аналогичными $R$ и новым начальным состоянием $q'_1$; вероятностям перехода между состояниями, соответствующими $R$ совпадают с аналогичными в $R$ при условии отсутствия камней в клетке; переход из $q'_1$ соответствуют переходам из $q_1$ при условие наличия камня в клетке. Обозначим количество состояний в $R'$ через $m$. Новый робот имеет те же вероятности положения в пространстве, начиная с отхода от камня в состояние $q'_1$ до возвращения к нему.
	  	
	  	Рассмотрим его ориентированный граф компонент сильной связности.
	  	 Если какой-нибудь лист не достижим при блуждании в пространстве до первого возвращения в изначальную клетку, то мы можем его выкинуть из графа, так как мы все равно в него не попадаем.
	  	  Выкинув таким образом все недостижимые компоненты, возьмём какое-нибудь состояние $q\neq q'_1$.

	  	  \begin{statement}
Множество клеток, в которых мы можем оказаться в состоянии $q$, будет подмножеством объединения конечного числа плоскостей с одинаковыми образующими векторами.
	  	  \end{statement}

	  	   Этот факт очевиден для случая, когда таких клеток конечно, так как их можно покрыть конечным количеством плоскостей (по плоскости на каждую клетку), так что будем рассматривать только $q$, для которых существует сколь угодно далёкая клетка с вероятностью её посетить в состоянии $q$.
	  	
	  	   Докажем это сначала для случая, где $q$ находится в листовой компоненте.
	  	    Возьмём случайное блуждание, построенное на переходах в пространстве робота между двумя состояниями $q$ (между ними могут быть только состояния отличные от $q$).
	  	
	  	    Если в нём нет трёх некомпланарных векторов, то все клетки, куда робот может попасть в этом состоянии лежат в какой-то одной плоскости (даже без условия остановки на камне). Кроме того, от остальных состояний он может дойти до этого за не более, чем $m$ ходов. Тогда множество клеток, где робот может оказаться в состояние $q$ удалено от стартовой клетки не больше, чем на $m$. Значит, состояние $q$ достижимо роботом только в клетках являющимися параллельными плоскостями, удалёнными от стартовой клетки не больше, чем на $m$, а таких конечно.
	  	
	  	    Для блуждания с тремя некомпланарными векторами $\forall \varepsilon>0$ существует конечное число клеток, достижимых с вероятностью хотя бы $\varepsilon$.
	  	    Тогда $\exists r(\varepsilon):$ для клеток удалённых от места нынешнего нахождения хотя бы на $r(\varepsilon)$ вероятность в них оказаться меньше $\varepsilon$.
	  	    Обозначим за $p$ минимальную вероятность дойти из состояния того же листа $q'$ до $q$ за не более, чем $m$ шагов.
	  	    Значит, есть клетка($\vec{x_1}$), достижимая блужданием на расстоянии не больше $m$ от изначальной клетки с вероятностью хотя бы $p$, но тогда она достижима из любой клетки в состоянии $q$ с вероятностью хотя бы $p$.
	  	    Кроме того, есть клетка($\vec{x_2}$), в которой робот может оказаться в состоянии $q$ на расстоянии от изначальной клетки($\vec{x_0}$) большем, чем $m+r(p)$.
	  	    Тогда расстояние между ($\vec{x_2}$) и ($\vec{x_1}$) хотя бы $r(p)$, но в этом случае вероятность попасть в ($\vec{x_1}$) из ($\vec{x_2}$) меньше $p$. Противоречие.

	  	        \begin{remark}
	  	        	Если компонента с состоянием $q$ не листовая, то из неё за не более, чем $m$ шагов можно дойти до состояния $q'$. Тогда клетки, где мы можем оказаться в состоянии $q$ находятся на расстоянии не больше $m$ от подмножества клеток объединения конечного числа плоскостей, что тоже является подмножеством клеток объединения конечного числа плоскостей.
	  	        \end{remark}

  	        Возьмём множество плоскостей $L_i$, соответствующих плоскостям состояний $q$. Кроме того, разрешим $L_i$ совпадать, чтобы плоскость учитывалась отдельно для каждого промежуточного состояния $q$ и $q_1$ из которого мы стартовали от камня. Их количество обозначим за $l$.
	  	\end{paragraph}

  	\begin{paragraph}{Перемещения с камнем.}
	Осталось разобраться, как устроены перемещения с камнем. Построим автомат, соответствующий тасканию камня изначального робота. Для этого состояние в котором мы уходим из клетки оставив камень будем рассматривать, как переход из этого состояния в состояния, в которых мы могли бы вернуться к камню, с вероятностями, с которыми это могло бы произойти. Чтобы мы не оставляли камень, добавим по одному состоянию на каждый такой переход, чтобы сделать шаг вверх с камнем и шаг вниз с ним же. Так как этот автомат все время таскает камень, то он ведёт себя, как просто робот без камня.
	
	Обозначим множество клеток, где побывает этот робот за $B$. Рассмотрим какую-нибудь из его листовых компонент связности. Мы можем в неё попасть за какое-то конечное число ходов робота. Обозначим $B'$ множество клеток, где побывает листовой робот. Рассмотрим вероятности побывать в клетках $\zs^5$ в состояние $q$ на плоскости $L_i$.

\begin{statement}
	Покажем, что $\forall \varepsilon>0$ множество клеток $x_j$ для которых $P (\exists i:$листовой робот попал в $L_i(x_j)$ в состояние $q)>\varepsilon$ имеет вид конечного объединения подпространств размерности четыре, где за подпространство размерности четыре берём множество клеток представимых в виде $\vec{x_0}+a_1\vec{x_1}+a_2\vec{x_2}+a_3\vec{x_3}+a_4\vec{x_4}$	при фиксированных линейных независимых $\vec{x_0},\ \vec{x_1},\ \vec{x_2},\ \vec{x_3},\ \vec{x_4}$ и целых $a_1,\ a_2,\ a_3,\ a_4$.
\end{statement}

\begin{proof}
	Давайте посмотрим на множество клеток, в которые он попадёт при блуждание по плоскостям $L_i$ с вероятностью хотя бы $\frac{\varepsilon}{l}$.
	Для этого введём на $\zs^5$ отношение эквивалентности $\vec{x_1}\sim\vec{x_2}$, которое верно при $L_i(\vec{x_1})=L_i(\vec{x_2})$.
	Посмотрим, что произойдёт с блужданием на склейке.
	Обозначим новый лабиринт через $Z'$, а его случайное блуждание через $X'$.
	Если $X'$ возвратное, то достижимые клетки задаются не более чем двумя векторами, но тогда $L_i(B)$ уже подпространство размерности четыре.
	Из невозвратности $X'$ следует, что в $Z'$ конечное число клеток с вероятностью попадания хотя бы $\frac{\varepsilon}{l}$.
	Развернув отображение обратно получим, что множество клеток $\zs^5$, в которые робот попадёт при блуждание от камня соответствущее плоскости $L_i$ с вероятностью хотя бы $\frac{\varepsilon}{l}$ будет конечным объединением плоскостей. Так как вероятность робота попасть способом, соответствующим $L_i$, в прообраз не превосходит вероятности из образа.
\end{proof}

Чтобы $P (\exists i:$листовой робот попал в $L_i(x_j)$ в состоянии $q)>\varepsilon$ надо, чтобы $\mathlarger\sum_{i}P($листовой робот попал в $L_i(x_j)$ в состоянии $q)>\varepsilon$, но тогда верно, что $\exists i:P($листовой робот попал в $L_i(x_j)$ в состоянии $q)>\frac{\varepsilon}{l}$. А множество таких $x_j$ конечно. Значит клеток, которые гарантированно посетит робот с вероятностью хотя бы $\varepsilon$ не больше конечного объединения подпространств четвёртой степени.
Проведя аналогичные рассуждения для других состояний этой листовой компоненты получим, что множество клеток, в которые попадает листовой робот с вероятностью хотя бы $m\cdot\varepsilon$ конечно. Тогда взяв $\varepsilon<\frac{1}{m}$ получим, что листовой робот гарантировано побывает в множестве клеток, покрываемым конечным объединением подпространств четвёртой степени, то есть изначальный робот не может обойти $\zs^5$
\end{paragraph}
	
\end{subsubsection}
\end{subsection}

\begin{subsection}{Робот с генератором случайных чисел, камнем и флажком.}

Перейдём к случаю робота с камнем и флажком. Изначальное положение камня и робота в клетке с нулевыми координатами, там же находится единственный элемент из $M$.

\begin{subsubsection}{Обход $\zs^6$}
	Для начала опишем, как, имея флажок и камень, побывать камнем на $\zs^4$, Робот ходит от камня без него по случайным векторам из множества $(\vec{e_1},\ -\vec{e_1},\ \vec{e_2},\ -\vec{e_2})$. Если он возвращается к камню не попав в процессе на клетку с флажком, то он перемещается с камнем по случайному вектору из $(\vec{e_3},\ -\vec{e_3},\ \vec{e_4},\ -\vec{e_4})$, иначе он перемещается с камнем по случайному вектору из $(\vec{e_1},\ -\vec{e_1},\ \vec{e_2},\ -\vec{e_2})$, а потом вместе с камнем переходит по случайному вектору из $(\vec{e_3},\ -\vec{e_3},\ \vec{e_4},\ -\vec{e_4})$. Из-за того, что случайное блуждание возвратно и обходит всю плоскость, робот с камнем гарантировано вернётся на плоскость $\vec{e_1},\ \vec{e_2}$. Так как он окажется там бесконечное число раз, то с вероятностью один робот в какой-то момент дойдёт до флага и обратно. Тогда камень гарантированно побывает во всех клетках плоскости $\vec{e_1},\ \vec{e_2}$ бесконечное число раз и из каждой клетки плоскости проблуждает вдоль $\vec{e_3},\ \vec{e_4}$.
	Добавив случайное блуждание от камня и обратно по векторам из множества $(\vec{e_5},\ -\vec{e_5},\ \vec{e_6},\ -\vec{e_6})$ после каждого перемещения камня получим обход $\zs^6$.
	
\end{subsubsection}

\begin{subsubsection}{Невозможность обхода $\zs^7$}
	Рассмотрим блуждания робота. Они могут иметь следующие виды:
\begin{itemize}
	\item Перемещение с камнем,
	\item Перемещение от камня и обратно,
	\item Перемещение от флага до камня или от камня до флага.
	\item Перемещение от флага до флага; этот вид можно не рассматривать, как отдельный тип, т.к. оно может обойти только множество клеток, являющихся подмножеством конечного числа плоскостей.
\end{itemize}
	
Заметим, что чтобы обходить $\zs^7$ мы всегда должны с вероятностью один возвращаться к флагу, иначе мы и $\zs^5$ не сможем обойти, и с вероятностью один возвращаться к камню, так как флаг --- это камень, который мы не можем таскать, а робот с камнем не может обойти даже $\zs^5$.

	Посмотрим на множество клеток, где может располагаться камень, чтобы от него можно было дойти до флага.

	Оно совпадает с множеством клеток до которых можно дойти от камня, а это множество является подмножеством объединения конечного числа плоскостей(из-за того, что блуждание робота без камня возвратное).

	Обозначим его через $L=\mathlarger\bigcup_{i}L_i$, где $L_i$-плоскости покрывающие перемещения робота от камня до камня или флага, мощность этого множества обозначим за $l$.
	Так как мы возвращаемся к флагу, то камень в какой-то момент должен возвращаться в $L({\{0\}}^k)$.
Посмотрим, как себя может вести робот с камнем между встречами с флажком.
Пусть мы в первый раз подойдём к камню в состоянии $q_1$.
Тогда покажем, какой робот без камня соответствует перемещениям робота с камнем с момента первой встречи камня после флага, до последнего отхода перед флагом.
Недетерминированный конечный автомат $R'(q'_1)$:

\begin{itemize}
	\item 	Состояния аналогичны $R$,
	\item 	Начальное состояние соответствует состоянию  $q_1$,
	\item Вероятностям перехода между состояниями, соответствующими $R$, совпадают с реализуемыми вероятностями перехода между изначальными состояния с камнем до следующего состояния с камнем. В случае, если при этом камень оставался на месте, к состояниям $R'$ добавляется дополнительные для прохода с камнем вверх и вниз (вероятность посещения клеток новым роботом от этого может только увеличиться),
	\item Обозначим количество состояний в $R'$ через $m$.
\end{itemize}

Новый робот имеет те же вероятности положения в пространстве с камнем, начиная с отхода от камня в состояние $q'_1$ до возвращения к нему.
Кроме того для клеток, где может оказаться $R$ с камнем верно, что блуждая из этой клетки $R'$ попадёт в множество клеток $L({\{0\}}^7)$ с вероятностью 1.
Покажем, что тогда множество клеток, которые он может посетить является конечным объединением подпространств размерности четыре.
Рассмотрим его ориентированный граф компонент сильной связности.
Возьмём какое-нибудь из состояние $q'$.м
Этот факт очевиден для случая, когда таких клеток конечно, так их можно покрыть конечным количеством подпространств, так что будем рассматривать только $q'$ для которых существует сколь угодно далёкая клетка с вероятностью её посетить в состояние $q'$.

Докажем это сначала для случая, где $q'$ находится в листовой компоненте.
Возьмём случайное блуждание, построенное на переходах в пространстве робота между двумя состояниями $q'$ (между ними могут быть только состояния отличные от $q'$).

Обозначим за $p$ минимальную вероятность дойти из состояния того же листа $q'_2$ до $q'$ за не более, чем $m$ шагов.
Так как на блуждание не наложены ограничения остановки, то из любой клетки, где $R'$ может побывать в состояние $q'$(обозначим их через $B$) можно с вероятностью хотя бы $p$ попасть в клетку из $L'=\mathlarger\bigcup_{d(\vec{x},{\{0\}}^7)\leq m} L(\vec{x})$ в состояние $q'$.
Возьмём конечное множество подпространств размерности четыре $L'_i: L'=\mathlarger\bigcup_{i} L'_i$, их количество $l'$.
Обозначим за $B_i\subset B:$ блуждая из них можно с вероятностью хотя бы $\frac{p}{l'}$ попасть в клетку из $L'_i$.
Тогда $B=\mathlarger\bigcup_{i=1}^{l'} B_i$.

Докажем, что множество клеток, из которых случайным блужданием можно попасть в $L'_i$ с вероятностью хотя бы $\frac{p}{l'}$ является подмножеством конечного объединения подпространств размерности четыре, тогда будет верно и то, что $B$ является подмножеством конечного объединения подпространств размерности четыре.
	Для этого введём на $\zs^7$ отношение эквивалентности $\vec{x_1}\sim\vec{x_2}$, которое верно при $L'_i(\vec{x_1})=L'_i(\vec{x_2})$.
	Посмотрим, что произойдёт с блужданием на склейке.
	Обозначим новый лабиринт через $Z'$, а его случайное блуждание через $X'$.
	Если $X'$ возвратное, то достижимые клетки задаются не более чем двумя векторами, но тогда $L'_i(B_i)$ уже подпространство размерности четыре.
	Из невозвратности $X'$ следует, что в $Z'$ конечное число клеток с вероятностью попадания хотя бы $\frac{p}{l'}$.
	Развернув отображение обратно получим, что множество клеток $\zs^7$, в которые робот попадёт при блуждание от камня соответствущее плоскости $L'_i$ с вероятностью хотя бы $\frac{p}{l'}$ будет подмножеством конечного объединения плоскостей.
	Так как вероятность робота попасть способом, соответствующим $L'_i$, в прообраз не превосходит вероятности из образа.

\begin{remark}
	Если компонента с состоянием $q$ не листовая, то из неё за не более, чем $m$ шагов можно бы было дойти до состояния $q'$. Тогда клетки, где мы можем оказаться в состояние $q$ находятся на расстояние не больше $m$ от подмножества клеток объединения конечного числа подпространств размерности четыре, что тоже является подмножеством клеток объединения конечного числа подпространств размерности четыре.
\end{remark}

Количество разных $R'$ конечно, в каждом из них конечное число состояний для каждого из которых верно, что множество клеток в котором он побывает является подмножеством объединения конечного числа подпространств размерности четыре. Значит робот $R$ побывает с камнем в множестве клеток, являющемся подмножеством объединения конечного числа подпространств размерности четыре. Но так как от камня робот уходит не дальше $L$, то множество клеток, где он побывает является подмножеством объединения конечного числа подпространств размерности шесть. Но тогда он не обходит $\zs^7$.

\end{subsubsection}

\end{subsection}

\begin{subsection}{Робот с генератором случайных чисел, камнем и плоскостью флажком}

Осталось рассмотреть случай робота с камнем и плоскостью флажков. Изначальное положение камня и робота  в клетке с нулевыми координатами. Из неё же выходит плоскость флажков с координатами $(0,0,0,0,0,0,a,b)$.

\begin{subsubsection}{Обход $\zs^8$}
\end{subsubsection}

Чтобы построить обход $\zs^8$ просто слегка модернизируем программу робота, обходящего $\zs^8$ с камнем и флажком. Оказавшись на флаге робот должен сделать случайный, равновероятный ход по одному из векторов $(\vec{e_7},\ \vec{-e_7},\ \vec{e_8},\ \vec{-e_8}, 0)$. Переход на $\vec{0}$ делаем ходом по вектору $\vec{e_7}$, а потом обратно. Из-за того, что это блуждание побывает во всех клетках плоскости бесконечное число раз, а изначальный робот был во всех клетках  $\zs^6$ (и находился на клетке с флагом и камнем бесконечное число раз), то теперь робот побывает в каждой клетке пространстве $\zs^8$,

\begin{subsubsection}{Невозможность обхода $\zs^9$}
	Этот случай не требует долгих доказательств, так как может быть доказан по аналогии со случем одного флажка. Множество клеток, где может оказаться камень будет являться объединением конечного числа ``подпространств''\  размерности 6. То есть клетки, которые гарантировано посетит робот, являются объединением конечного числа "подпространств"\  размерности 8. Значит робот не обойдёт $\zs^9$.
\end{subsubsection}

\end{subsection}

\end{document}